\renewcommand\bibsection%
\newcommand{\E}{\mathbb{E}}
\renewcommand{\comment}[1]{}
\newcommand{\val}{v}
\newcommand{\vals}{\mathbf{\val}}
\newcommand{\bid}{b}
\newcommand{\bids}{{\mathbf \bid}}
\begin{document}
\markboth{B. Sivan and V. Syrgkanis}{Vickrey Auctions for Irregular
Distributions}
\title{Vickrey Auctions for Irregular Distributions}
\author{Balasubramanian Sivan\inst{1}
\and
Vasilis Syrgkanis\inst{2}
}

\institute{
Microsoft Research, \email{bsivan@microsoft.com}
\and
Computer Science Dept., Cornell University, \email{vasilis@cs.cornell.edu}
}

\pagestyle{plain}

\iffalse
\category{J.4}{Social and Behavioral Sciences}{Economics}
\category{F.2.0}{Analysis of Algorithms and Problem Complexity}{General}
\terms{Algorithms, Economics, Theory}
\terms{Mechanism design, Revenue maximization}
\keywords{Irregular distributions, simple mechanisms, second price auction}
\acmformat{Balasubramanian Sivan and Vasilis Syrgkanis, 2013. Vickrey Auction
for Irregular Distributions}
\fi

\maketitle{}
\begin{abstract}
The classic result of Bulow and Klemperer~\cite{BK96} says that in a single-item
auction recruiting one more bidder and running the Vickrey auction achieves a
higher revenue than the optimal auction's revenue on the original set of
bidders, when values are drawn i.i.d. from a regular distribution. We give a
version of Bulow and Klemperer's result in settings where bidders' values are
drawn from non-i.i.d. irregular distributions. We do this by modeling irregular
distributions as some convex combination of regular distributions. The regular
distributions that constitute the irregular distribution correspond to different
population groups in the bidder population. Drawing a bidder from this
collection of population groups is equivalent to drawing from some convex combination of these
regular distributions. We show that recruiting one extra bidder from each
underlying population group and running the Vickrey auction gives at least half
of the optimal auction's revenue on the original set of bidders.

\keywords{Bulow-Klemperer, irregular distributions, prior-independent, Vickrey
auction}

\end{abstract}
\section{Introduction}
\label{sec:intro}
% !TEX root = irregular_bulow_klemperer.tex
Simplicity and detail-freeness are two much sought-after themes in auction
design.  The celebrated classic result of Bulow and Klemperer~\cite{BK96} says
that in a standard single-item auction with $n$ bidders, when the valuations of
bidders are drawn  i.i.d from a distribution that satisfies a regularity
condition, running a Vickrey auction (second-price auction) with one extra
bidder drawn from the same distribution yields at least as much revenue as the
optimal auction for the original $n$ bidders. The Vickrey auction is both
simple and detail-free since it doesn't require any knowledge of bidder
distributions. Given this success story for i.i.d. regular distributions, we ask
in this paper, what is the analogous result when we go beyond i.i.d regular
settings?  Our main result is a version of  Bulow and Klemperer's result to
non-i.i.d irregular settings.  Our work gives the first positive
results in designing simple mechanisms for irregular distributions, by parameterizing
irregular distributions, i.e., quantifying the amount of irregularity in a distribution. Our
parameterization is motivated by real world market structures and in turn
indicates that most realistic markets will not be highly irregular with respect
to this metric. Our results enable the first positive approximation bounds on the
revenue of the second-price auction with an anonymous reserve in both i.i.d.
and non-i.i.d. irregular settings.

Before explaining our results, we briefly describe our setting. We consider a
single-item auction setting with bidders having quasi-linear utilities.  That
is the utility of a bidder is his value for the item if he wins, less the price
he is charged by the auction. We study auctions in the Bayesian setting, i.e.
the valuations of bidders are drawn from known distributions\footnote{One of
the goals of this work is to design detail-free mechanisms, i.e., minimize the
dependence on  knowledge of distributions. Thus most of our results make little
or no use of knowledge of distributions. We state our dependence precisely
while stating our results.}.  We make the standard assumption that bidder
valuations are drawn from independent distributions. 
%The standard auction that we use throughout is the Vickrey auction (or the
%second-price auction) possibly with a reserve price. This auction solicits
%sealed bids from agents and chooses the highest bidder who exceeds the reserve
%price and charges him the maximum of the reserve price and the second highest
%value.  Finally, a widely assumed technical condition on bidder distributions
%is the regularity condition.  This states that the function $x -
%\frac{1}{h(x)}$ is non-decreasing, where the function $h(x) =
%\frac{f(x)}{1-F(x)}$ is the hazard-rate function. 

\paragraph{Irregular distributions are common.}
The technical regularity condition in Bulow and Klemperer's result is quite
restrictive, and indeed irregular distributions are quite common in markets.
For instance, any distribution with more than a single mode violates the
regularity condition. 
%\vsedit{Intuitively, the regularity condition
%mathematically corresponds to the marginal revenue curve (derivative of revenue
%with respect to quantity produced) of a market being monotonic (see
%\cite{Bulow1989}). Though the latter is a reasonable assumption for a
%homogeneous market, it is easily broken in the presence of heterogeneity as we
%describe below.}
The most prevalent reason for a bidder's valuation distribution failing to
satisfy the regularity condition is that a bidder in an auction is randomly
drawn from a heterogeneous population. The population typically is composed of
several groups, and each group has its characteristic preferences. For instance
the population might consist of students and seniors, with each group typically
having very different preferences from the other. While the distribution of
preferences within any one group might be relatively well-aligned and the value
distribution might have a single mode and satisfy the regularity condition, the
distribution of a bidder drawn from the general population, which is a mixture
of such groups, is some convex combination of these individual distributions.
Such a convex combination violates regularity even in the simplest cases. 

For a variety of reasons, including legal reasons and absence of good data, a
seller might be unable to discriminate between the buyers from different
population groups and thus has to deal with the market as if each buyer was
arriving from an irregular distribution. However, to the least, most sellers do
know that their market consists of distinct segments with their characteristic
preferences.

\paragraph{Measure of Irregularity.} The above description suggests that a
concrete measure of irregularity of a distribution is the number of regular
distributions required to describe it.  We believe that such a measure could be
of interest in both designing mechanisms and developing good provable revenue
guarantees for irregular distributions in many settings.  It is a rigorous
measure of irregularity for any distribution since any distribution can be
well-approximated almost everywhere by a sufficient number of regular ones and
if we allow the number of regular distributions to grow to infinity then any
distribution can be exactly described\footnote{This follows from the fact that a
uniform distribution over an interval is a regular distribution and every
distribution can be approximated in the limit using just uniform distributions.} Irregular distributions that typically
arise in practice are combinations of a small number of regular distributions
and this number can be considered almost a constant with respect to the market
size. In fact there exist  evidence in recent
\cite{Johnson2003,Guimaraes2011} and classical \cite{Robinson1933}
microeconomic literature  that irregularity of the value distribution
predominantly arises due to market segmentation in a small number of parts
(e.g. loyal customers vs. bargain-hunters \cite{Johnson2003}, luxury vs. low
income buyers \cite{Guimaraes2011} etc). Only highly pathological
distributions require a large number of regular distributions to be described
--- such a setting in a market implies that the population is heavily segmented
and each segment has significantly different preferences from the rest. 

Motivated by this, we consider the following setting: the market/population
consists of $k$ underlying population groups, and the valuation distribution of
each group satisfies the regularity condition. Each bidder is drawn according
to some probability distribution over these groups. That is bidder $i$
arrives from group $t$ with probability $p_{i,t}$. Thus if $F_t$ is the
cumulative distribution function (cdf) of group $t$, the cdf of bidder $i$ is
$F_i = \sum_t p_{i,t} F_t$. For example, consider a market for a product that
consists of two different population groups, say students and seniors. Now
suppose that two bidders come from two cities with different student to senior
ratios. This would lead to the probability $p_{i,t}$'s to be different for
different $i$'s. This places us in a non-i.i.d. irregular setting. All our
results also extend to the case where these probabilities $p_{i,t}$ are
arbitrarily correlated. 

\begin{example}[An illustrative example]
\label{ex:exp} Consider an eBay seller of an
ipad. One could think of the market as segmented mainly in two groups of
buyers: young and elder audience. These two market segments have completely
different value distributions. Suppose for instance, that the value
distribution of young people is distributed as a normal distribution
$N(\mu_y,\sigma)$ while the elder's is distributed as a normal distribution
$N(\mu_e,\sigma)$ with $\mu_y> \mu_e$.  In addition, suppose that the eBay
buyer population is composed of a fraction $p_y$ young people and $p_e<p_y$ of
elders.  Thus the eBay seller is facing an irregular valuation distribution
that is a mixture of two Gaussian distribution with mixture probabilities
$p_y$ and $p_e$ (see Figure \ref{fig:irreg-example}). Even more generally, this mixture
could be dependent on the city of the buyer and hence be different for different buyers.

\begin{figure}\label{fig:irreg-example}
\subfigure{
\includegraphics[scale=.7]{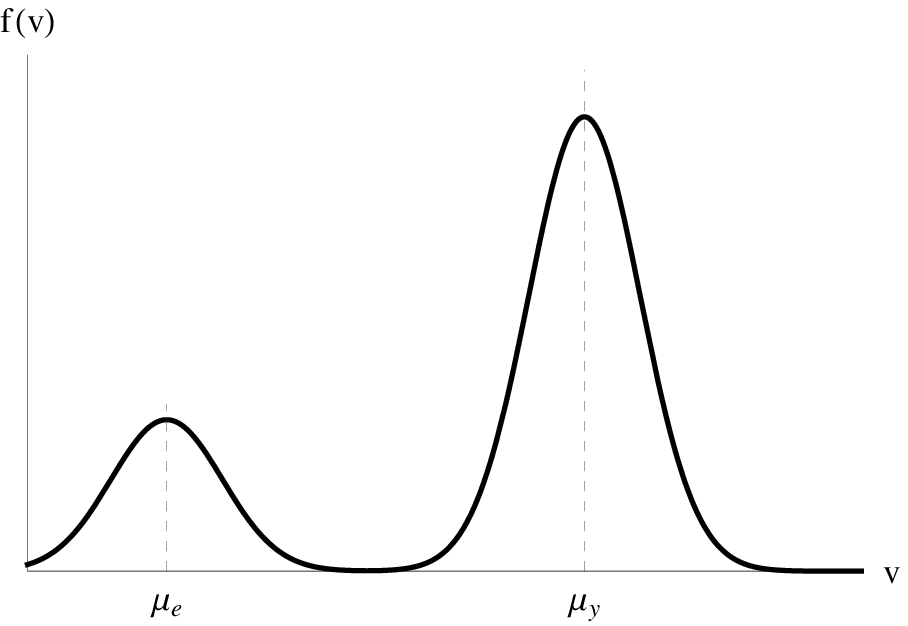}}
\subfigure{
\includegraphics[scale=.7]{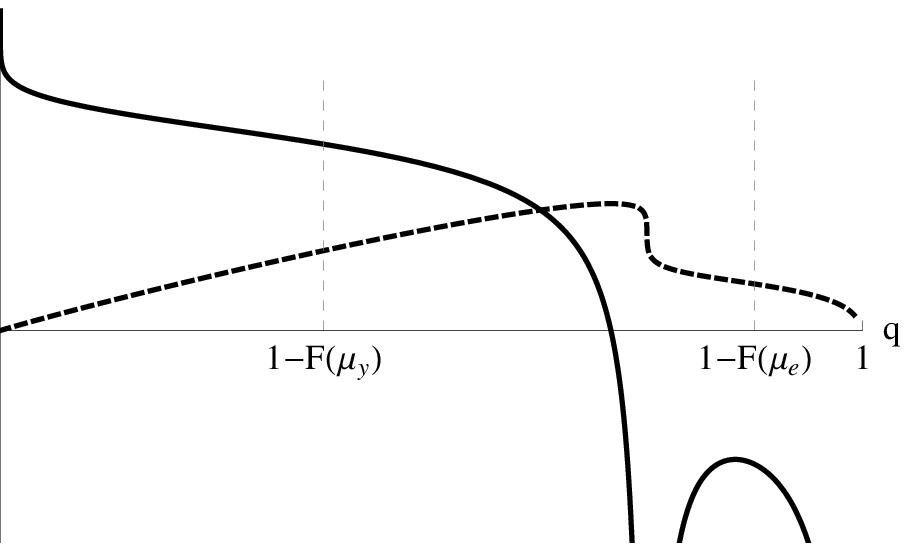}}
\caption{Left figure depicts pdf of the bimodal distribution of valuations of Example \ref{ex:exp}, while the right
figure depicts the revenue (dashed) $R(q) = q\cdot F^{-1}(1-q)$ where $q$ is the probability of sale, and the
marginal revenue curve $\frac{dR(q)}{dq}$ for this distribution.}
\end{figure}

The eBay seller has two ways of increasing the revenue that he receives:
a) increasing the market competition by bringing extra bidders through advertising
(possibly even targeted advertising), and b) setting appropriately his reserve price in
the second price auction that he runs. Observe that he has no means of price
discriminating. Even running Myerson's auction which is non-discriminatory for
i.i.d. settings leads to randomization. In particular, randomization leads to the
undesirable feature of sometimes serving an agent with smaller value. This raises the main questions that we
address in this paper: how should he run his advertising campaign? How many
people more (either through targeted or non-targeted advertising) should he
bring to the auction to get a good approximation to the optimal revenue? What
approximation of the optimal revenue is he guaranteed by running a Vickrey
auction with a single anonymous reserve? 

%\vsdelete{None of the existing literature in approximately optimal auctions provides any
%positive results for irregular distributions, and hence, would be unable to
%give any provable guarantees for these questions, even in this simple example.
%On the contrary,} 
Giving a sneak preview of our main results, our paper gives
positive results to all the above questions: 1) bringing just one extra young
bidder in the auction (targeted advertising) and running a Vickrey auction with
no reserve would yield revenue at least $1/2$ of the optimal revenue (Theorem
\ref{thm:hazard-rate}), 2) bringing $2$ extra bidders drawn from the
distribution of the combined population (non-targeted advertising) would yield
at least $\frac{1}{2}\left(1-\frac{1}{e}\right)$ of the optimal revenue
(Theorem \ref{thm:iid-hazard-rate}), 3) running a Vickrey auction among the
original $n$ bidders with an anonymous reserve price can yield an $8$-approximation of the optimal
revenue (Theorem \ref{thm:anonIrregularNiid}). 
\end{example}

\subsection*{Our Results} 

%What is the analogue of Bulow and Klemperer's result for such settings?
\paragraph{First result (Section~\ref{sec:irregular}): Targeted Advertising for non-i.i.d. irregular settings.}
We show that by recruiting an extra bidder from each underlying group and
running the Vickrey auction, we get a revenue that is at least half of the
optimal auction's revenue in the original setting. While the optimal auction
is manifestly impractical in a non-i.i.d. irregular setting due to its
complicated rules, delicate dependence on knowledge of distribution and its
discriminatory nature\footnote{The optimal auction in a non-i.i.d. setting will
award the item to the bidder with the highest virtual value and this is not
necessarily the bidder with the highest value.  In addition, typically a
different reserve price will be set to different bidders.  This kind of
discrimination is often illegal or impractical. Also, the exact form of the
irregular distribution will determine which region's of bidder valuations will
need to be "ironed", i.e.  treated equally. %Thus a very small error in the modeling of the distribution can lead to very suboptimal revenue.
}, the Vickrey
auction with extra bidders is simple and detail-free: it makes no use of the
distributions of bidders.  The auctioneer must just be able to identify that
his market is composed of different groups and must conduct a targeted
advertising campaign to recruit one extra bidder from each group. 
%An interesting open question that we leave for future work would be to
%identify the number amount of extra people from each population group needed
%to reach optimal revenue. 
This result can be interpreted as follows: while advertising was the solution proposed by
Bulow and Klemperer~\cite{BK96} for i.i.d. regular distributions,
\textit{targeted advertising} is the right approach for non-i.i.d. irregular
distributions. 
%\vscomment{deleted example from here, moved it to section of main theorem.}
%

\paragraph{Tightness.} While we do not know if the the factor $2$ approximation we get is tight,
Hartline and Roughgarden~\cite{HR09} show that even in a non-i.i.d. regular
setting with just two bidders it is impossible to get better than a
$4/3$-approximation by duplicating the bidders, i.e., recruiting $n$ more
bidders distributed identically to the original $n$ bidders. This lower bound
clearly carries over to our setting also: there are instances where recruiting only one
bidder from each different population group cannot give anything better than a 
$4/3$-approximation.

\paragraph{Second result (Main result, Section~\ref{sec:dominant}): Just one
extra bidder for hazard rate dominant distributions.} If the $k$ underlying
distributions are such that one of them stochastically dominates, hazard-rate
wise, the rest, then we show that recruiting just one extra bidder from the
hazard rate dominant distribution and running the Vickrey auction gets at least
half of the optimal revenue for the original setting.  A distribution $F$ hazard
rate dominates a distribution $G$ iff for every $x$ in the intersection of the
support of the two distributions the hazard rate $h_F(x) (=
\frac{f(x)}{1-F(x)}$, where $f(\cdot)$ and $F(\cdot)$ are the pdf and cdf
respectively) is at most $h_G(x) (= \frac{g(x)}{1-G(x)}$, where $g(\cdot)$ and
$G(\cdot)$ are the pdf and cdf
respectively). We denote
such a domination by $F \succeq_{hr} G$.

Further, hazard rate dominance requirement is not uncommon: for instance, if all
the $k$ underlying distributions were from the same family of distributions like
the uniform, exponential, Gaussian or even power law, then one of them is
guaranteed to hazard rate dominate the rest.  
%For example if the $k$ underlying distributions
%were uniform distributions $U[a_i,b_i]$, then the distribution with the highest
%upper bound $b_i$ hazard rate dominates the rest of them; if all the $k$
%underlying distributions were exponential, or Gaussian distributions
%with the same variance, the distribution with the highest mean hazard rate
%dominates the rest. If the underlying distributions are power-law
%distributions, the distribution with the smallest scaling exponent hazard-rate
%dominates the rest (See Section~\ref{sec:dominant} for a proof). 
%In such situations, recruiting \textit{only one} extra bidder from the
%stochastically dominant distribution is enough to guarantee a half
%approximation. 
Though several common distributions satisfy this hazard rate
dominance property, it has never been
previously exploited in the context of approximately optimal auctions.

\paragraph{Third result (Section~\ref{sec:iid-irregular}): Non-targeted
advertising for i.i.d.  irregular distributions.} When the bidders are
identically distributed, i.e., the probability $p_{i,t}$ of distribution $t$ getting
picked for bidder $i$ is the same for all $i$ (say $p_t$), we show that 
%a natural question 
%to ask is how many extra bidders should we bring from the original distribution
if each $p_t \geq \delta$, then  
bringing $\Theta\left(\frac{log(k)}{\delta}\right)$ extra bidders drawn from the
original distribution (and not from one of the $k$ underlying distributions) yields a constant approximation to the optimal revenue. 
%This result follows from our first result. 
Further in the special case
where one of the underlying regular distributions hazard rate dominates the rest and
its mixture probability is $\delta$ then $\Theta\left(\frac{1}{\delta}\right)$ bidders drawn
from the original distribution are enough to yield a constant approximation.
This shows that when each of the underlying population groups is sufficiently 
thick, then recruiting a few extra bidders from the original distribution
is all that is necessary.
% to approximate the optimal auction. 

\begin{remark}
For the latter result it is not necessary that the decomposition
of the irregular distribution that we use should resemble
the actual underlying population groups. Even if the market is highly fragmented
with several population groups, as long as there is mathematically some way to decompose
the irregular distribution into the convex combination of a few regular distributions
our third result still holds. 
\end{remark}

\paragraph{Fourth result (Section~\ref{sec:singleReserve}): Vickrey with a Single
(Anonymous) Reserve.} Suppose we are unable to recruit extra bidders. What is
the next simplest non-discriminatory auction we could hope for? The Vickrey
auction with a single reserve price. We show that when the non-i.i.d irregular distributions
all arise as different convex combinations of the $k$ underlying regular
distributions, there exists a reserve such that the Vickrey auction with this
single reserve obtains a $4k$ approximation to the optimal revenue. Though
the factor of approximation is not small, it is the first non-trivial
approximation known for non-i.i.d irregular distributions via Vickrey with
anonymous reserve. In addition, as we
already explained, in typical market applications we expect the number of
different population groups $k$ to be some small constant.
% and therefore the
%latter approximation bound will be reasonable for natural scenarios.

What is the best one can hope for a non-i.i.d irregular setting? Chawla,
Hartline and Kleinberg~\cite{CHK07} show that for general non-i.i.d irregular
distributions it is impossible to get a $o(\log n)$ approximation using Vickrey
auction with a single reserve price, and it is unknown if this lower bound is
tight, i.e., we do not yet know of a $\Theta(\log n)$ approximation.  However
the $o(\log n)$ impossibility exists only for arbitrary non-i.i.d irregular
settings and doesn't apply when you assume some natural structure on the
irregularity of the distributions, which is what we do.  

To put our results in context: Single reserve price Vickrey auctions were also
analyzed by Hartline and Roughgarden~\cite{HR09} for non-i.i.d \textit{regular}
settings, that showed that there exists a single reserve price that obtains a
$4$-approximation. Chawla et al.~\cite{CHMS10} show that when bidders are drawn
from non-i.i.d irregular distributions, a Vickrey auction with a
distribution-specific reserve price obtains a $2$-approximation. Thus if there
are $k$ different distributions, $k$ different reserve prices are used in this
result. This means that if we insist on placing a single (anonymous) reserve
price, this result guarantees a $O(1/k)$ approximation. In particular, when all
distributions are different, i.e. $k=n$, this boils down to a $O(1/n)$
approximation. 

In contrast, our result shows that even when all the distributions are
different, as long as every irregular distribution can be described as some
convex combination of $k$ regular distributions, Vickrey with a single reserve
price gives a factor $4k$ approximation.  Further the factor does not grow with
the number of players $n$. 
%Finally, we believe that there is scope for
%improvement in this factor of $4k$ and leave it as an open question. 

\begin{remark}
We also show that if the bidders are distributed with identical mixtures and the mixture
probability is at least $\delta$ then Vickrey auction with a single reserve
achieves a $\Theta\left(1+\frac{\log(k)}{n\delta}\right)$ approximation. If one of the
regular distribution hazard rate dominates the rest and has mixture probability
$\delta$, then Vickrey with a single reserve achieves a
$\Theta\left(1+\frac{1}{n\delta}\right)$ approximation.
\end{remark}

Observe that if all $k$ regular
distributions in the mixture have equal probability of arriving, then our results
shows that a Vickrey auction with a single reserve achieves at least a
$\Theta\left(1+\frac{k\log(k)}{n}\right)$ of the optimal revenue. This
approximation ratio becomes better as the number of bidders increases, as long
as the number of underlying regular distributions remains fixed. If
the number of underlying distributions increases linearly with the number of
bidders, then the result implies a $\Theta(\log(n))$ approximation,
matching the lower bound of \cite{CHMS10}.

\paragraph{Related Work.} Studying the trade-off between simple and
optimal auctions has been a topic of interest for long in auction design. The
most famous result is the already discussed result of Bulow and
Klemperer~\cite{BK96} for single-item auctions in i.i.d regular settings.
Hartline and Roughgarden~\cite{HR09} generalize~\cite{BK96}'s result for
settings beyond single-item auctions: they consider auctions where the set of
buyers who can be simultaneously served form the independent set of a
matroid; further they also relax the i.i.d constraint and deal with non-i.i.d
settings.  Dhangwatnotai, Roughgarden and Yan~\cite{DRY10} study revenue
approximations via VCG mechanisms with multiple reserve prices, where the
reserve prices are obtained by using the valuations of bidders as a sample from
the distributions.  Their results apply for matroidal settings when the
distributions are regular, and for general downward closed settings when the
distributions satisfy the more restrictive monotone hazard rate condition. 
%Single-sample mechanisms are not so relevant with the single-item setting that
%we study here, since in a single item setting a Vickrey auction always yields
%higher revenue than a single sample auction. In addition, we focus more on
%bidder augmentation results of the form of Bulow and Klemperer rather than
%prior-free mechanisms. In addition our results on imply better approximation
%guarantee's for the revenue of the Vickrey auction than those implied by
%\cite{DRY10}. 
As previously discussed, Chawla et al.~\cite{CHMS10} show that for i.i.d
irregular distributions, Vickrey auction with a single reserve price gives a
2-approximation to the optimal revenue and for non-i.i.d distributions Vickrey
auction with a distribution-specific reserve price guarantees a
$2$-approximation; Chawla et al.~\cite{CHK07} show that it is impossible to
achieve a $o(\log n)$ approximation via Vickrey auction with a single reserve
price for non-i.i.d irregular distributions. Single-item Vickrey auctions with bidder
specific monopoly reserve prices were also studied in
Neeman~\cite{N03} and Ronen~\cite{R01}. Approximate revenue maximization via
VCG mechanisms with supply limitations were studied in Devanur et
al.~\cite{DHKN11} and Roughgarden et al.~\cite{RTY12}.

\section{Preliminaries}
\label{sec:prelim}

%\section{Model and Notation}
\paragraph{Basic model.}
We study single item auctions among $n$ bidders. Bidder $i$ has a value $v_i$ for a good, and the valuation profile for all
the $n$ players together is denoted by $\vals = (\val_1,\val_2,\dots,\val_n)$. In a sealed bid auction each player submits
a bid, and the bid profile is denoted by $\bids = (b_1,b_2,\dots, b_n)$. An auction is a pair of functions $(\mathbf{x},\mathbf{p})$,
where $x$ maps a bid vector to outcomes $\{0,1\}^n$, and $p$ maps a bid vector to $\mathbf {R}_+^n$, i.e., a 
non-negative payment for each player. The players have quasi-linear utility functions, i.e., their utilities have a separable and
linear dependence on money, given by $u_i(v_i, \vals_{-i}) = v_ix_i(\vals) - p_i(\vals)$. An auction is said to be dominant strategy truthful if submitting
a bid equal to your value yields no smaller utility than any other bid in every situation, i.e., for all 
$\vals_{-i}$, $v_ix_i(\vals)-p_i(\vals) \geq v_ix_i(b_i,\vals_{-i})-p_i(b_i,\vals_{-i})$. Since we focus on truthful auctions in this paper $\bids$ = $\vals$
from now on. 

\paragraph{Distributions.}
We study auctions in a Bayesian setting, i.e., the valuations of bidders are drawn from a distribution. In particular, we assume that valuation of bidder $i$
is drawn from distribution $F_i$, which is independent from but not necessarily identical to $F_j$ for $j\neq i$. For ease of presentation, we assume that these distributions
are continuous, i.e., they have density function $f_i$. We assume that the support of these distributions are intervals on the non-negative real line, with non-zero density 
everywhere in the interval. 

\paragraph{Regularity and irregularity.}
The hazard rate function of a distribution is defined as $h(x) = \frac{f(x)}{1-F(x)}$. A distribution is said to have a Monotone Hazard Rate(MHR)
if $h(x)$ is monotonically non-decreasing. A weaker requirement on distributions is called regularity: the function $\phi(x) = x - \frac{1}{h(x)}$
is monotonically non-decreasing. We do not assume either of these technical conditions for our distributions. Instead we assume that the market of bidders 
consists of $k$ groups and each group has a regular distribution $G_i$ over valuations.  Each bidder is drawn according to some (potentially different) convex combination
of these $k$ regular distributions, i.e., $F_i(x) = \sum_{t=1}^{k} p_{i,t}G_t(x)$. Such a distribution $F_i(\cdot)$ in most cases significantly violates the regularity condition.

%Almost all irregular distributions used as examples in the literature (eg.
%Sydney Opera house distribution) can be written as a convex combination of a
%small number of regular distributions. 
In fact, mathematically, any irregular distribution can be approximated by a
convex combination of sufficiently many regular distributions and as we take the
number of regular distributions to infinity then it can be described exactly.
Thus the number of regular distributions needed to describe an irregular
distribution is a valid measure of irregularity that is well-defined for any
distribution.

%\paragraph{\textbf{Auction formats used.}}
%The \textit{Vickrey auction} or the second-price auction awards the item to the agent with the highest bid and charges him the second highest value.
%The \textit{Vickrey auction with a reserve} computes the highest bidder and awards the item to him only if his bid exceeds the reserve price, and charges him the 
%maximum of the second highest bid and the reserve price. The \textit{Vickrey auction with $k$ extra bidders}, is just the Vickrey auction run on $n+k$ bidders instead of the original $n$ bidders. Both the Vickrey auction and the Vickrey auction with reserve do not need any knowledge of distribution. The Vickrey auction with a reserve
%requires just a single parameter from the distribution, namely the reserve price itself. 

\paragraph{Revenue Objective.}
The objective in this paper to design auctions to maximize expected revenue, i.e., the expectation of the sum of the payments of all agents. Formally, the objective is to 
maximize $\E_{\vals}[\sum_i p_i(\vals)]$. Myerson~\cite{M81} characterized the expected revenue from any auction as its expected virtual surplus, i.e. the expected sum of virtual values
of the agents who receive the item, where the virtual value of an agent is $\phi(v) = v - \frac{1}{h(v)}$. 
Formally, for all bidders $i$, $\E_{\vals}[p_i(\vals)] = \E_{\vals}[\phi_i(v_i)x_i(\vals)]$. The equality holds even if we condition on a fixed $v_{-i}$, i.e.,
$\E_{v_i}[p_i(v_i,\vals_{-i})] = \E_{v_i}[\phi(v_i)x_i(v_i,\vals_{-i})]$.

%\section{Non-i.i.d Regular Settings}
%\label{sec:regular}
%\input{regular.tex}

\section{Targeted Advertising and the Non-i.i.d. Irregular Setting}
\label{sec:irregular}
% !TEX root=irregular_bulow_klemperer.tex
In this section we give our version of Bulow and Klemperer's result~\cite{BK96}
for non-i.i.d irregular distributions. 

\begin{theorem}
\label{thm:mainThm}
Consider an auction among $n$ non-i.i.d irregular bidders where each bidder's distribution $F_i$ is some
mixture of $k$ regular distributions $\{G_1,\ldots,G_k\}$ (the set of regular distributions is 
the same for all bidders but the mixture probabilities could be different). The revenue 
of the optimal auction in this setting is at most twice the revenue of a Vickrey
auction with $k$ extra bidders, where each bidder is drawn from a distinct distribution from $\{G_1,\dots,G_k\}$.  
\end{theorem}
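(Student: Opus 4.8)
The plan is to decompose the optimal revenue on the original $n$ irregular bidders into contributions from the $k$ underlying regular groups, and then to charge each such contribution against a Bulow--Klemperer-style argument applied to that group together with its one extra recruited bidder. Concretely, fix the optimal auction $M^*$ on the original bidders. By Myerson, its revenue equals its expected virtual surplus, and by linearity of expectation we can write $\mathrm{REV}(M^*) = \sum_i \E[\phi_i(v_i) x_i^*(\vals)]$. Since $F_i = \sum_t p_{i,t} G_t$, the first step is to think of each bidder $i$ as first drawing a group label $t$ with probability $p_{i,t}$ and then drawing a value from the regular distribution $G_t$; conditioning on the realized label vector, we split the expected revenue into a sum over the $k$ groups of the revenue ``attributable'' to values drawn from $G_t$. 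Call these quantities $R_1,\dots,R_k$, so $\mathrm{REV}(M^*) = \sum_{t=1}^k R_t$.

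The second step is to bound each $R_t$. The key observation is that $R_t$ is at most the optimal revenue obtainable from the (random) set $S_t$ of original bidders whose realized label is $t$ --- these are genuine i.i.d.\ draws from the regular distribution $G_t$ --- and then to invoke Bulow--Klemperer: for any realization of $S_t$, the optimal revenue on $S_t$ (which uses virtual values for $G_t$ and hence is a regular single-item problem) is at most the expected revenue of the Vickrey auction run on $S_t$ plus the one extra bidder recruited from $G_t$. Averaging over the randomness of the label assignment, $R_t \le \E[\mathrm{VICK}(S_t \cup \{\text{extra}_t\})]$. The Vickrey revenue on $S_t \cup \{\text{extra}_t\}$ is in turn at most the Vickrey revenue on the \emph{full} recruited instance (all $n$ original bidders plus all $k$ extra bidders), since adding bidders only raises the second-highest bid --- this is where the factor comes in: summing the inequality over all $t$ gives $\sum_t R_t \le \sum_t \E[\mathrm{VICK}(\text{full instance})] = k \cdot \mathrm{VICK}(\text{full})$, which is far too lossy. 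So the real work is to be smarter: rather than bounding each $R_t$ by the \emph{full} Vickrey revenue, one should bound the \emph{sum} $\sum_t \E[\mathrm{VICK}(S_t\cup\{\text{extra}_t\})]$ directly against $2\,\mathrm{VICK}(\text{full})$, exploiting that the $S_t$ partition the original bidders. The natural handle is that the revenue of Vickrey on a disjoint union relates to the revenues on the parts: the highest value overall is the highest among the parts' highest values, and one argues that the sum of ``second-price'' revenues across a partition into $k$ parts, each augmented by one bidder, is at most twice the second-price revenue of the union augmented by $k$ bidders. This combinatorial inequality about order statistics of a partitioned-plus-augmented set is the crux.

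Carrying that out, I would set up the following: let the union instance have first- and second-highest values $v_{(1)} \ge v_{(2)}$, so $\mathrm{VICK}(\text{full}) = \E[v_{(2)}]$ in the no-reserve case. In each part $S_t \cup \{\text{extra}_t\}$, the Vickrey revenue is the second-highest of those values; the subtlety is that the single part containing the overall-highest bidder may have its second-highest be small, but the \emph{extra} bidder recruited for that group contributes a fresh independent draw from $G_t$ that, in expectation, recovers a constant fraction, and all the \emph{other} parts have second-highest value bounded by $v_{(2)}$ of the union (in fact bounded by the max of that part). Summing and using that at most one part ``loses'' its top bidder to the global maximum should yield the factor $2$. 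The main obstacle I anticipate is precisely this last inequality --- making rigorous that recruiting one extra bidder per group compensates, in expectation and after summation, for the at-most-one group whose internal competition is destroyed by removing the global winner --- and getting the constant to be exactly $2$ rather than something larger; this likely uses the regularity of each $G_t$ together with a Bulow--Klemperer/marginal-revenue argument applied per group, plus a careful accounting of which original bidder is the global maximizer. Everything else (Myerson virtual surplus, linearity, the monotonicity of Vickrey revenue under adding bidders) is routine.
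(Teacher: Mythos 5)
Your opening move (viewing each bidder as first drawing a group label and then a value from the corresponding regular $G_t$, and upper-bounding the optimum by a label-aware benchmark) is in the same spirit as the paper's proof, and the per-group attribution $R_t\le \E[\mathrm{OPT}(S_t)]$ followed by Bulow--Klemperer on $S_t\cup\{\mathrm{extra}_t\}$ is individually sound. But the step you yourself flag as ``the crux'' and leave unproven --- the recombination inequality
\begin{equation*}
\sum_{t=1}^k \E\bigl[\mathrm{VICK}(S_t\cup\{\mathrm{extra}_t\})\bigr]\;\le\; 2\,\E\bigl[\mathrm{VICK}(\text{all } n+k \text{ bidders})\bigr]
\end{equation*}
--- is not just hard, it is false. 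Take $k=3$ groups, each $G_t$ uniform on $[1-\epsilon,1]$ (regular), and six original bidders, two assigned deterministically to each group. Each augmented part has three bidders with values $\approx 1$, so each part's Vickrey revenue is $\approx 1$ and the left side is $\approx 3$, while the union's Vickrey revenue is $\approx 1$, so the right side is $\approx 2$. The root cause is that splitting $\mathrm{OPT}$ into per-group contributions and bounding each $R_t$ by the stand-alone optimum of $S_t$ discards the global constraint that only one item is sold: $\sum_t \E[\mathrm{OPT}(S_t)]$ can already be a factor $k$ larger than $\mathrm{OPT}$, and no order-statistics argument about partitions can claw that back with a factor of $2$.

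The paper avoids this by never decomposing across groups at the revenue level. It conditions on the entire label profile $q$, notes that $\mathcal{R}_M\le \sum_q p(q)\,\E_{\vals\sim G(q)}[R^q_M(\vals)]$ (the label-aware optimum can only be better), and then applies, to the \emph{whole} conditional instance at once, the known generalization of Bulow--Klemperer to non-i.i.d.\ regular settings partitioned into i.i.d.\ groups (Theorem 6.3 of Roughgarden--Talgam-Cohen--Yan, proved via the Hartline--Roughgarden commensurate-auctions machinery): the conditional optimum is at most twice the revenue of a \emph{single} Vickrey auction on all original bidders plus one extra bidder per distinct group present in $q$. It then uses monotonicity of Vickrey revenue in the bidder set to pass from $k(q)$ to $k$ extra bidders, and the fact that $SP_{n+k}$ does not depend on $q$ to fold the sum over $q$ back into the mixture distribution. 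If you want to salvage your outline, you need this kind of ``global'' comparison between the conditional optimum and one Vickrey auction on the full augmented instance (either by citing that theorem or reproving it via virtual-value/commensurability arguments); the per-group charging scheme cannot be made to work.
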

\begin{proof}
Bidder $i$'s distribution $F_i(x) = \sum_{t=1}^k p_{i,t} G_t(x)$ can be thought of
as being drawn based on the following process: first a biased $k$-valued coin is
flipped that decides from which distribution $G_t$ player $i$'s value will come
from (according to the probabilities $p_{i,t}$), and then a sample from $G_t$ is
drawn. Likewise, the entire valuation profile can be thought of as being drawn
in a similar way: first $n$ independent, and possibly non-identically biased,
$k$-valued coin tosses, decide the regular distribution from each bidder's value
is going to be drawn from.  Subsequently a sample is drawn from each
distribution. 

Let the random variable $q_i$ be the index of the regular distribution that bidder $i$'s value is going to be drawn, i.e., 
$q_i$ is the result of the coin toss for bidder $i$. Let $q$ denote the index profile 
of all players. Let $p(q)=\prod_{i=1}^n p_{i,q_i}$ be the probability that the index profile $q$ results
after the $n$ coin tosses. Let $G(q)=\times_i G_{q_i}$ be the joint product distribution of players' values conditioned on 
the profile being $q$.  

Let $M_q$ be the optimal mechanism when bidders' distribution profile is $q$. 
Let $\mathcal{R}_{M}^q$ be the expected revenue of mechanism $M_q$. Let $R_{M}^q(\vals)$ denote the 
revenue of the mechanism when bidders have value $\vals$. The revenue of the optimal mechanism $M$ which cannot learn and exploit the actual 
distribution profile $q$ is upper bounded by the revenue of the optimal mechanism that can first learn $q$. Therefore we have,
\begin{equation}
\mathcal{R}_M\leq\sum_{q\in [1..k]^n} p(q) \E_{\vals\sim G(q)}[R_M^q(\vals)]
\end{equation}

Now, $\E_{\vals\sim G(q)}[R_M^q(\vals)]$ corresponds to the optimal expected revenue when bidder $i$'s distribution is the regular distribution $G_{q_i}$. Let $k(q)$ denote the number of distinct regular distributions contained in the profile $q$. Note that $k(q) \leq k$ for all $q$. Thus the above expectation corresponds to the revenue of a single-item auction where players can 
be categorized in $k(q)$ groups and bidders within each group $t$ are distributed i.i.d. according to a regular distribution $G_t$. Theorem 6.3 of \cite{RTY12} applies to such a setting and shows
that the optimal revenue for each of these non-i.i.d regular settings will be at 
most twice the revenue of Vickrey auction with one extra bidder for each distinct distribution in the profile $q$. 
Hence,
\begin{align}
\mathcal{R}_M\leq\sum_{q\in [1..k]^n} p(q) \E_{\vals\sim G(q)}[R_M^q(\vals)]&\leq 
2\sum_{q\in [1..k]^n} p(q) \E_{\vals\sim G(q)}[R_{SP_{n+k(q)}}(\vals)]\label{eqn:mainEqn}\\
& \leq 2\sum_{q\in [1..k]^n} p(q) \E_{\vals\sim G(q)}[R_{SP_{n+k}}(\vals)]\label{eqn:mainEqnFinal}
\end{align}
Since, the Vickrey auction with $k$ extra bidders doesn't depend on the index profile $q$
the RHS of~\eqref{eqn:mainEqnFinal} corresponds to the expected revenue of
$SP_{n+k}$ when bidders come from the initial i.i.d irregular distributions. 
\qed\end{proof}

The above proof actually proves an even stronger claim: the revenue from running the  
Vickrey auction with $k$ extra bidders is at least half approximate even if the auctioneer could distinguish bidders by learning the bidder distribution profile $q$ and
run the corresponding optimal auction $R_M^q$. 

\textit{Lower bound.} A corner case of our theorem is when each bidder comes from a different regular
distribution. From Hartline and Roughgarden~\cite{HR09} we know that a lower
bound of $4/3$ exists for such a case. In other words there exists two regular
distributions such that if the initial bidders came each from a different
distribution among these, then adding two extra bidders from those
distributions will not give the optimal revenue but rather a $4/3$
approximation to it.  The same lower bound proves
that if bidders came from the same mixture of these two regular distributions
(i.e. are i.i.d), then the expected revenue of the auction that first
distinguishes from which regular distribution each bidder comes from and then
applies the optimal auction, yields  higher revenue than adding two extra
bidders from the two distributions and running a Vickrey auction.

\section{Just one extra bidder for hazard rate dominant distributions}
\label{sec:dominant}
In this section we examine the setting where among the $k$ underlying regular
distributions there exists one distribution that stochastically dominates the
rest in the sense of hazard rate dominance.  Hazard rate dominance is a
standard dominance concept used while establishing revenue guarantees for
auctions (see for example~\cite{K11}) and states the following: A
distribution $F$ hazard rate dominates a distribution $G$ iff for every $x$ in
the intersection of the support of the two distributions: $h_F(x)\leq h_G(x)$.
We denote such a domination by $F\succeq_{hr} G$.

In such a setting it is natural to ask whether adding just a single player from the dominant distribution
is enough to produce good revenue guarantees. We actually show that adding only one extra person
coming from the dominant distribution achieves exactly the same worst-case guarantee as adding 
$k$ extra bidders one from each underlying distribution. 
%We also make the mild assumption that the supports of the distributions are intervals.

\begin{theorem}
\label{thm:hazard-rate}
 Consider an auction among $n$ non-i.i.d irregular bidders where each bidder's distribution $F_i$ is some mixture of $k$ regular distributions $\{G_1,\ldots,G_k\}$ such that $G_1\succeq_{hr} G_t$ for 
all $t$. The revenue of the optimal auction in this setting is at most twice the revenue of a Vickrey
auction with one extra bidder drawn from $G_1$.
\end{theorem}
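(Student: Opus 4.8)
The plan is to mirror the proof of Theorem~\ref{thm:mainThm} up to the point where the problem is reduced to a family of non‑i.i.d.\ \emph{regular} instances, and then to replace the black‑box invocation of \cite{RTY12} by an argument that exploits hazard rate dominance so that a \emph{single} extra bidder suffices. Think of $v_i$ as produced by first drawing an index $q_i$ (with probabilities $p_{i,t}$) and then a sample from $G_{q_i}$; conditioning on the index profile $q\in[k]^n$, the optimal revenue is at most $\sum_q p(q)\,\mathcal{R}_M^q$, where $\mathcal{R}_M^q$ is the optimal revenue for the regular profile $(G_{q_1},\dots,G_{q_n})$. Since the Vickrey auction $SP_{n+1}$ that appends one bidder drawn from $G_1$ does not use $q$, averaging a per‑profile bound over $q$ (exactly as in the last step of the proof of Theorem~\ref{thm:mainThm}) transfers it to the original irregular instance. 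So it suffices to show: for every regular profile $(G_{q_1},\dots,G_{q_n})$ with $G_1\succeq_{hr}G_{q_i}$ for all $i$, the optimal revenue is at most twice the revenue of the second price auction on these $n$ bidders together with one extra bidder drawn from $G_1$.

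For this per‑profile statement I would follow the single‑sample/duplication paradigm of \cite{DRY10,HR09,RTY12}, but using the one $G_1$‑bidder as a common ``reserve sample'' for all $n$ original bidders at once. Write Myerson's revenue as $\sum_i \E_{\vals_{-i}}\!\bigl[\tau_i(\vals_{-i})\,\overline G_{q_i}(\tau_i(\vals_{-i}))\bigr]$, where $\tau_i(\vals_{-i})$ is bidder $i$'s winning threshold; this uses regularity, so each bidder's winning region is an interval and the identity $\E[\phi(v)\mathbf{1}[v\ge t]]=t\overline F(t)$ applies. The revenue of $SP_{n+1}$ is the expected second highest of $(v_1,\dots,v_n,v_{n+1})$, which decomposes according to whether the $G_1$‑bidder has the top value: when it does not, it dominates the revenue of the Vickrey auction on the $n$ bidders with anonymous reserve $v_{n+1}$; when it does, it additionally collects the payment $\max_{j\le n}v_j$. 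Hazard rate dominance is used in two places: (i) $G_1\succeq_{hr}G_{q_i}$ for every $i$ — and, more generally, $G_1\succeq_{hr}F_i$ for the mixture $F_i=\sum_t p_{i,t}G_t$, since $\overline G_1 g_t-g_1\overline G_t=\overline G_1\overline G_t\,(h_{G_t}-h_{G_1})\ge 0$ — so one sample from $G_1$ stochastically substitutes for a fresh sample of any individual bidder; and (ii) this domination persists under truncating $G_{q_i}$ from below (conditioning on the other bidders only raises $\tau_i$), because lower truncation leaves the hazard rate above the cutoff unchanged. Combining these with concavity of each regular revenue curve (the expected revenue from posting a uniformly random quantile is at least $\tfrac12 R^{\mathrm{mon}}_{q_i}$) should recover each term $\tau_i\overline G_{q_i}(\tau_i)$ up to a factor $2$, and summing over $i$ finishes the proof.

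The main obstacle is precisely that last sentence: unlike in \cite{RTY12}, where every bidder gets its own fresh sample, here the single dominant sample $v_{n+1}\sim G_1$ is typically \emph{much larger} than the bidder it should reserve against, so the naive per‑bidder single‑sample bound $\E_{w\sim G_1}[R_{q_i}(w)]\ge\tfrac12 R^{\mathrm{mon}}_{q_i}$ fails. The slack must come from the $G_1$‑bidder actually participating and winning (and then paying the highest original value), and one has to argue that in every regime — a ``small'' effective reserve, where the DRY10 concavity bound suffices, versus a ``large'' one, where the $G_1$‑bidder wins often and pays a lot — the factor $2$ is preserved and the contributions attribute to bidders so the sum telescopes. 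Two minor points also need care: $G_1$ need not occur in the profile $q$ (harmless, since $SP_{n+1}$ is defined independently of $q$), and because the $G_t$ may have different supports one should first record that, under the paper's convention that supports are intervals with positive density, $G_1\succeq_{hr}G_t$ does imply $G_1\succeq_{\mathrm{fosd}}G_t$.
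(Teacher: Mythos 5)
Your outer reduction is exactly the paper's: condition on the index profile $q$, bound the optimal revenue by the profile-wise optimal revenues, and observe that $SP_{n+1}$ with one $G_1$-bidder is independent of $q$, so a per-profile factor-2 bound averages back to the irregular instance. The gap is in the per-profile statement itself, and it is the gap you yourself flag: your plan to use the single $G_1$-sample as a common ``fresh sample'' reserve for every bidder, via the DRY10 concavity bound, does not go through, precisely because $G_1$ first-order dominates each $G_{q_i}$ and so the sample is systematically too high --- $\E_{w\sim G_1}[R_{q_i}(w)]$ can be arbitrarily far below $\tfrac12 R^{\mathrm{mon}}_{q_i}$. Your proposed repair (``the slack must come from the $G_1$-bidder winning and paying $\max_j v_j$, argue regime by regime, make the contributions telescope'') is a program, not an argument; no mechanism is given for attributing the $G_1$-bidder's payment across the $n$ bidders so that each term $\tau_i\overline G_{q_i}(\tau_i)$ is recovered up to a factor $2$, and it is not clear this accounting can be made to work bidder by bidder.

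The paper avoids this entirely by proving the per-profile claim (its Lemma~\ref{lem:hazardCommensurateExtension}) through the commensurate-auctions framework of \cite{HR09} rather than through single-sample reserves. One conditions on the event that the optimal auction and $SP_{n+1}$ select different winners and verifies two things: (a) the Vickrey winner's conditional expected virtual value is nonnegative, and (b) the optimal winner's conditional expected virtual value is at most the Vickrey price (this part is as in \cite{HR09}). Hazard rate dominance enters only in (a), via a pointwise chain: the Vickrey winner $S$ satisfies $v_S\geq v^*$, regularity of $F_S$ gives $\phi_S(v_S)\geq\phi_S(v^*)$ (with the interval-support care you mention when $v^*$ falls below the support of $F_S$), and $F_1\succeq_{hr}F_S$ gives $\phi_S(v^*)\geq\phi_1(v^*)$; then, conditioned on $\vals$, disagreement forces $v^*$ above a threshold, so $\E[\phi_1(v^*)\mid\cdot]$ is an expectation of $\phi_1(v^*)$ conditioned on a lower bound on $\phi_1(v^*)$ and hence is at least the unconditional expectation, which is nonnegative. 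This is the missing idea in your write-up: some such complete argument for the per-profile regular case is needed, and your single-sample route, as stated, does not supply it.
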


The proof is based on a new lemma for the regular distribution setting: bidders are drawn
from a family of $k$ regular distributions such that one of them hazard-rate
dominates the rest. This lemma can be extended to prove
Theorem~\ref{thm:hazard-rate} in a manner identical to how Theorem 6.3 of
Roughgarden et al.~\cite{RTY12} was extended to prove
Theorem~\ref{thm:mainThm} in our paper. We don't repeat that extension here,
and instead just prove the lemma. The lemma uses the notion of commensurate
auctions defined by Hartline and Roughgarden~\cite{HR09}.

\begin{lemma}
\label{lem:hazardCommensurateExtension}
Consider a non-i.i.d. regular setting where each player's value comes from some
set of distributions $\{F_1,\ldots,F_k\}$ such that $F_1\succeq_{hr} F_t$ for
all $t$. The optimal revenue of this setting is at most twice the revenue of
Vickrey auction with one extra bidder drawn from $F_1$. 
\end{lemma}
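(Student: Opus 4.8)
The plan is to follow the Bulow-Klemperer template of Hartline and Roughgarden via \emph{commensurate auctions}, but with the twist that all the extra competition is supplied by a single bidder drawn from the hazard-rate dominant distribution $F_1$ rather than one per distribution. Recall that the $\gamma$-commensurate (or ``marginal-revenue'') technique compares the optimal mechanism on $n$ bidders to the VCG mechanism run on $n+m$ bidders by exhibiting a feasible allocation of the larger instance whose virtual surplus dominates that of the optimal mechanism on the smaller instance, up to a factor. So first I would recall from \cite{HR09} the notion of a commensurate mechanism and the fact that VCG on the augmented population is revenue-optimal among commensurate mechanisms in the matroid (here, single-item) setting; the goal is then to build a commensurate mechanism on $n+1$ bidders whose expected revenue is at least $\frac12 \mathcal{R}_{\mathrm{Opt}}$.

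Next I would set up the virtual-value bookkeeping. Let $\phi_i$ be the virtual value function of $F_i$; since each $F_i$ is regular, $\phi_i$ is monotone, and Myerson's lemma lets me write the optimal revenue as $\E[\max(0,\max_i \phi_i(v_i))]$. The key structural input is the hazard-rate domination $F_1 \succeq_{hr} F_t$: pointwise $h_{F_1}(x)\le h_{F_t}(x)$ means $1/h_{F_1}(x)\ge 1/h_{F_t}(x)$, hence $\phi_1(x) = x - 1/h_{F_1}(x) \le x - 1/h_{F_t}(x) = \phi_t(x)$ on the common support. So the bidder drawn from $F_1$ is, in a precise virtual-value sense, the ``strongest'' extra bidder one could recruit among the $G_t$'s — intuitively this is why a single copy of $F_1$ suffices where previously one needed one copy of each $G_t$. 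The argument I would run: consider running VCG (no reserve) on the $n$ original bidders together with the one extra bidder $v_0\sim F_1$. One standard way to bound its revenue below is to couple it with the mechanism that offers the optimal (Myersonian) outcome to the original $n$ bidders whenever the original mechanism would sell, and otherwise ``gives'' the item to bidder $0$; charging bidder $0$ a threshold price, and using that $v_0$'s value distribution hazard-rate dominates each $G_t$, one shows bidder $0$'s expected payment recaptures at least the revenue ``lost'' relative to the optimal mechanism, up to the factor of two. More concretely I would split into the event $E$ that the optimal mechanism on the $n$ original bidders sells and its complement, bound $\mathcal{R}_{\mathrm{Opt}} = \mathcal{R}_{\mathrm{Opt}}\mathbf{1}[E]$, and argue (i) on $E$, VCG$_{n+1}$ revenue is at least half the optimal virtual surplus by a commensurateness/monotonicity argument, and (ii) on $\bar E$, bidder $0$'s presence lets VCG$_{n+1}$ extract value that the original optimum left on the table — and the hazard-rate dominance is exactly what guarantees the relevant threshold/virtual-value inequality goes the right way regardless of which $G_t$ each original bidder turned out to come from.

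The main obstacle, and where I would spend most of the effort, is step (i)/(ii) above: making the single extra bidder ``stand in'' simultaneously for whichever of the $k$ distributions shows up among the originals. In the original Hartline-Roughgarden argument, each recruited bidder is matched to its own population; here I must argue that one bidder from $F_1$ dominates a mix, and the coupling must be uniform over the (unknown) realized distribution profile $q$. The hazard-rate dominance gives a clean pointwise comparison of virtual value functions \emph{and} of quantile-to-value maps ($F_1^{-1}(u) \le$ issues aside, $F_1$ is stochastically dominated in the hazard-rate order), so the plan is to phrase the coupling in quantile space: draw a common uniform quantile and compare the value/virtual value bidder $0$ contributes against what a ``duplicate'' of any original bidder would contribute, then invoke the commensurate-mechanism optimality of VCG$_{n+1}$ from \cite{HR09}. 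Once the lemma is in hand, Theorem~\ref{thm:hazard-rate} follows by averaging over the coin-flip profile $q$ exactly as in the proof of Theorem~\ref{thm:mainThm}, with $G_1 \succeq_{hr} G_t$ ensuring $G_1$ remains the dominant choice within every profile.
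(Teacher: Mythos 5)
You have the right scaffolding --- the commensurate-auctions framework of Hartline and Roughgarden, and the key pointwise fact that $F_1\succeq_{hr}F_t$ implies $\phi_1(x)\le\phi_t(x)$ on common supports --- and this is indeed the paper's route. But the heart of the lemma, which you yourself flag as ``the main obstacle,'' is never carried out, and the plan you sketch for it points in the wrong direction. Commensurateness requires two inequalities conditioned on the event that the Vickrey winner $S(\vals,v^*)$ and the optimal-auction winner $R(\vals,v^*)$ \emph{differ} (inequalities \eqref{eqn:virtualdom} and \eqref{eqn:pricedom}), not a split on whether the optimal mechanism sells; the nontrivial one is \eqref{eqn:virtualdom}, that the Vickrey winner's expected virtual value is nonnegative on that event. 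The paper proves it by a concrete chain absent from your sketch: the Vickrey winner has the maximum value, so $v_{S}\ge v^*$; regularity gives $\phi_S(v_S)\ge\phi_S(v^*)$ (with a separate boundary case, using interval supports, when $v^*$ lies below the support of $F_S$); hazard-rate dominance gives $\phi_S(v^*)\ge\phi_1(v^*)$; and finally, conditioned on $\vals$, the event $S\ne R$ is a threshold event ``$v^*\ge\theta(\vals)$,'' so by regularity of $F_1$ the conditional expectation of $\phi_1(v^*)$ is at least the unconditional expectation of a virtual value, which is nonnegative. No quantile-space coupling is needed, and your coupling plan as stated is shaky: hazard-rate dominance means $F_1$ first-order stochastically dominates each $F_t$ (so $F_1^{-1}(u)\ge F_t^{-1}(u)$, not the reverse), and $\phi_1\le\phi_t$ makes the extra bidder the one with the \emph{smallest} virtual value at any given value --- its role is to furnish the lower bound $\phi_1(v^*)$ on the winner's virtual value, not to be the ``strongest'' stand-in in virtual-value terms.

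Your step (i)/(ii) decomposition on the event $E$ that the optimal mechanism sells also does not by itself deliver the factor $2$: on $E$ you would still have to compare the optimal winner's virtual value against the Vickrey price precisely when the two winners differ, which is inequality \eqref{eqn:pricedom}, and the bookkeeping in \cite{HR09} that turns the two inequalities into the factor-$2$ bound is conditioned on $S\ne R$, not on $E$. In short: correct framework and correct key inequality $\phi_1\le\phi_t$, but the actual verification of \eqref{eqn:virtualdom} --- regularity plus the threshold structure of the event $S\ne R$, plus the case where $v^*$ falls outside the support of $F_S$ --- is missing, and that verification is the entire content of the lemma beyond what \cite{HR09} already provides.
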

\begin{proof}
Let $\vals$ denote the valuation profile of the initial $n$ bidders and let $v^*$ the valuation of the 
extra bidder from the dominant distribution. Let $R(\vals,v^*)$ and $S(\vals,v^*)$ denote the winners of the optimal auction ($M$) and of 
the second price auction with the extra bidder ($SP_{n+1}$) respectively. We will show
that the two auctions are commensurate (see \cite{HR09}) which is sufficient for proving the lemma. Establishing commensurateness boils down to showing
that:
\begin{align}
\E_{\vals,v^*}[\phi_{S(\vals,v^*)}(v_{S(\vals,v^*)})|S(\vals,v^*)\neq R(\vals,v^*)]\geq~& 0\label{eqn:virtualdom}\\
\E_{\vals,v^*}[\phi_{R(\vals,v^*)}(v_{R(\vals,v^*)})|S(\vals,v^*)\neq R(\vals,v^*)]\leq~& \E_{\vals,v^*}[p_{S(\vals,v^*)}|S(\vals,v^*)\neq R(\vals,v^*)]\label{eqn:pricedom}
\end{align}
where $p_S$ is the price paid by the winner of the second price auction. 
The proof of equation~\eqref{eqn:pricedom} is easy and very closely 
follows the proof in~\cite{HR09} above.

We now prove equation~\eqref{eqn:virtualdom}. Since $F_1\succeq_{hr} F_t$ we have that for all $x$ in the intersection of the support of $F_1$ and $F_t$: 
$h_1(x)\leq h_t(x)$, which in turn implies that $\phi_1(x)\leq \phi_t(x)$,
since $\phi_t(x)=x-\frac{1}{h_t(x)}$. By the definition of winner in Vickrey auction we have $\forall i: v_{S(\vals,v^*)}\geq v_i$. In particular, $v_{S(\vals,v^*)}\geq v^*$. If $v^*$ is in the support of $F_{S(\vals,v^*)}$, then the latter, by regularity of distributions, implies that $\phi_{S(\vals,v^*)}(v_{S(\vals,v^*)})\geq \phi_{S(\vals,v^*)}(v^*)$. Now $F_1\succeq_{hr}F_t$ implies that $\phi_{S(\vals,v^*)}(v^*)\geq \phi_{1}(v^*)$ (since by definition $v^*$ must
also be in the support of $F_1$). If $v^*$ is not in the support of $F_{S(\vals,v^*)}$, then since $v^*<v_{S(\vals,v^*)}$ and 
all the supports are intervals, it must be that $v^*$ is below the lower bound $L$ of the support of 
$F_{S(\vals,v^*)}$. Wlog we can assume that the support of $F_1$ intersects the support of every other distribution. Hence, since $v^*$ is below
$L$ and the support of $F_1$ is an interval, $L$ will also be in the support of $F_1$. Thus $L$ is
in the intersection of the two supports. By regularity of $F_{S(\vals,v^*)}, F_1$ and by the hazard rate dominance 
assumption, we have $\phi_{S(\vals,v^*)}(v_{S(\vals,v^*)})\geq \phi_{S(\vals,v^*)}(L)\geq \phi_1(L)\geq \phi_1(v^*)$. 
Thus in any case $\phi_{S(\vals,v^*)}(v_{S(\vals,v^*)})\geq \phi_1(v^*)$. Hence, we immediately get that:
\begin{align*}
\E_{\vals,v^*}[\phi_{S(\vals,v^*)}(v_{S(\vals,v^*)})|S(\vals,v^*)\neq R(\vals,v^*)]\geq
\E_{\vals,v^*}[\phi_{1}(v^*)|S(\vals,v^*)\neq R(\vals,v^*)]
\end{align*}
Conditioned on $\vals$ the latter expectation becomes:
\begin{align*}
\E_{v^*}[\phi_1(v^*)|S(\vals,v^*)\neq R(\vals,v^*),\vals]
\end{align*}
But conditioned on $\vals$, $R(\vals,v^*)$ is some fixed bidder $i$. Hence, the latter expectation is
equivalent to: $\E_{v^*}[\phi_1(v^*)|S(\vals,v^*)\neq i]$ for some $i$. We claim that for all $i$ the latter expectation must be positive. Conditioned
on $\vals$, the event $S(\vals,v^*)\neq i$ happens only if $v^*$ is sufficiently high, i.e., there is a threshold $\theta(\vals)$ such that $S(\vals,v^*) \neq i$ happens only if $v^*\geq \theta(\vals)$
(if $i$ was the maximum valued bidder in the profile $\vals$ then $\theta(\vals) = v_i$, else $\theta(\vals) = 0$.)
By regularity of distributions, $v^* \geq \theta(\vals)$ translates to $\phi_1(v^*)\geq \phi_1(\theta)$.
So we now have to show that: $\E_{v^*}[\phi_1(v^*)|\phi_1(v^*)\geq \phi_1(\theta)] \geq 0$.
Since the unconditional expectation of virtual value is already non-negative, the expectation conditioned on a lower bound on virtual values is clearly non-negative.
\qed\end{proof}

\paragraph{Examples and Applications.} There are many situations where a
hazard-rate dominant distribution aactually exists in the market. We provide
some examples below. 

%\textit{Uniform distributions.}
%Suppose the $k$ underlying distributions were all uniform distributions of the form $U[a_i,b_i]$. The hazard rate $h_i(x) = \frac{f_i(x)}{1-F_i(x)} = \frac{\frac{1}{b_i-a_i}}{\frac{b_i-x}{b_i-a_i}} = \frac{1}{b_i - x}$. Clearly, the distribution with a larger $b_i$ hazard-rate dominates the distribution with a larger $b_i$. Thus there will always exist a hazard-rate dominant distribution. 
%
%\textit{Exponential  distributions.}
%If the $k$ underlying distributions were all exponential distributions, i.e., $G_i(x) = 1-e^{-\lambda_ix}$, then the hazard rate $h_i(x) = \lambda_i$. Thus the distribution with the smallest $\lambda_i$ hazard rate dominates the rest. 
%
%\textit{Power-law distributions.}
%If the $k$ underlying distributions were all power-law distributions, namely, $G_i(x) = 1-\frac{1}{x^{\alpha_i}}$, then the hazard rate $h_i(x) - \frac{\alpha_i}{x}$. Thus the distribution with the smallest $\alpha_i$ hazard-rate dominates the rest. 

\textit{Uniform, Exponential, Power-law distributions.}
Suppose the $k$ underlying distributions were all uniform distributions of the
form $U[a_i,b_i]$. The hazard rate $h_i(x) = \frac{1}{b_i - x}$. Clearly, the
distribution with a larger $b_i$ hazard-rate dominates the distribution with a 
smaller $b_i$. If the $k$ underlying distributions were all exponential distributions, i.e., $G_i(x) = 1-e^{-\lambda_ix}$, then the hazard rate $h_i(x) = \lambda_i$. Thus the distribution with the smallest $\lambda_i$ hazard rate dominates the rest. 
If the $k$ underlying distributions were all power-law distributions, namely, $G_i(x) = 1-\frac{1}{x^{\alpha_i}}$, then the hazard rate $h_i(x) = \frac{\alpha_i}{x}$. Thus the distribution with the smallest $\alpha_i$ hazard-rate dominates the rest. 

\textit{A general condition. }
If all the $k$ underlying regular distributions were such that for any pair $i$, $j$ they satisfy $1-G_i(x) = (1-G_j(x))^{\theta_{ij}}$, then it is easy to verify that there always exists one distribution that hazard-rate dominates the rest of the distributions. For instance, the family of exponential distributions, and the family of power-law distributions are special cases of this general condition.

\section{Non-Targeted Advertising and the i.i.d. Irregular Setting}
\label{sec:iid-irregular}
% !TEX root=irregular_bulow_klemperer.tex
In this section we consider the setting where all the bidders are drawn from
the same distribution $F$.  We assume that $F$ can be written as a convex
combination of $k$ regular distributions $F_1,\ldots,F_k$, i.e. $F =
\sum_{t=1}^k p_t F_t$ and such that the mixture probability $p_t$ for every
distribution is at least some constant $\delta$: $\forall t\in [1,\ldots,k]:
p_t\geq \delta$. A natural question to ask in an i.i.d. setting is how many
extra bidders should be recruited from the original distribution to achieve a
constant fraction of the optimal revenue (i.e., by running a non-targeted
advertising campaign)?

In this section answer the above question as a function of the number of
underlying distributions $k$ and the minimum mixture probability $\delta$. We
remark that our results in this section don't require the decomposition of $F$
into the $F_t$'s resemble the distribution of the underlying population groups.
Even if the number of underlying population groups is very large, as long as
there is some mathematical way of decomposing $F$ into $k$ regular
distributions with a minimum mixture probability of $\delta$, our results go
through. Hence, one can optimize our result for each $F$ by finding the
decomposition that minimizes our approximation ratio.

%The following two theorems are proved in Apendix~\ref{app:iid-irregular}
\begin{theorem}
\label{thm:iid-irregular}
Consider an auction among $n$ i.i.d. irregular bidders where the bidders'
distribution $F$ can be decomposed into a mixture of $k$ regular distributions
$\{G_1,\ldots,G_k\}$ with minimum mixture probability $\delta$.  The revenue of
the optimal auction in this setting is at most $2\frac{k+1}{k}$ the revenue of
a Vickrey auction with $\Theta\left(\frac{log(k)}{\delta}\right)$ extra bidders
drawn from distribution $F$.
\end{theorem}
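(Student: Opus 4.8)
The plan is to mimic the proof of Theorem~\ref{thm:mainThm} and then bridge from its benchmark auction to a second price auction with $m=\Theta(\log(k)/\delta)$ extra i.i.d.\ bidders, using a coupon-collector argument together with the monotonicity of no-reserve second price revenue under adding bidders. Model a draw from $F=\sum_{t=1}^k p_t G_t$ as a two-stage process: each of the $n$ original bidders first tosses a $k$-valued coin to pick an index $q_i$ (with $\Pr[q_i=t]=p_t\ge\delta$) and then draws from $G_{q_i}$. Write $q=(q_1,\ldots,q_n)$, $p(q)=\prod_i p_{q_i}$, $G(q)=\times_i G_{q_i}$, $T(q)=\{q_1,\ldots,q_n\}$ for the set of distinct indices appearing and $k(q)=|T(q)|\le k$. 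Exactly as in the proof of Theorem~\ref{thm:mainThm}, $\mathcal{R}_M\le\sum_q p(q)\,\E_{\vals\sim G(q)}[R_M^q(\vals)]$, and conditioned on $q$ the instance is a non-i.i.d.\ regular setting whose bidders split into $k(q)$ i.i.d.\ groups, so Theorem~6.3 of~\cite{RTY12} yields $\E_{\vals\sim G(q)}[R_M^q(\vals)]\le 2\,\E_{\vals\sim G(q)}[R_{SP_{n+k(q)}}(\vals)]$, where $SP_{n+k(q)}$ augments the $n$ original bidders with one extra bidder drawn from $G_t$ for each $t\in T(q)$.

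Next I would relate this benchmark to the mechanism of interest, $SP_{n+m}$, which adds $m$ extra bidders drawn i.i.d.\ from $F$; model their indices $r_j$, $j\in[m]$, the same way. Fix $q$ and let $E(q,r)$ denote the coverage event $T(q)\subseteq\{r_1,\ldots,r_m\}$. On $E(q,r)$ there is a natural coupling of the two instances: keep the $n$ original values, and for each $t\in T(q)$ identify the benchmark's extra $G_t$-bidder with some extra bidder of $SP_{n+m}$ whose index equals $t$ (choosing distinct ones for distinct $t$); then the value multiset of $SP_{n+k(q)}$ is a sub-multiset of that of $SP_{n+m}$. Since the second price auction with no reserve pays the second-highest value, which is monotone non-decreasing under adding bidders, this gives $R_{SP_{n+m}}\ge R_{SP_{n+k(q)}}$ pointwise under the coupling, and hence
\begin{equation*}
\E[R_{SP_{n+m}}\mid q]\ \ge\ \Pr_r[E(q,r)]\cdot\E_{\vals\sim G(q)}[R_{SP_{n+k(q)}}(\vals)],
\end{equation*}
the factorization being legitimate because the second factor depends only on $q$.

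It remains to lower-bound $\Pr_r[E(q,r)]$ uniformly in $q$. Since $T(q)\subseteq[k]$, a union bound gives $\Pr_r[\overline{E(q,r)}]\le\sum_{t=1}^k(1-p_t)^m\le k(1-\delta)^m\le k e^{-\delta m}$, so taking $m=\lceil\ln(k(k+1))/\delta\rceil=\Theta(\log(k)/\delta)$ makes $\Pr_r[E(q,r)]\ge 1-\tfrac{1}{k+1}=\tfrac{k}{k+1}$. Averaging over $q$ and chaining the three inequalities gives $\E[R_{SP_{n+m}}]\ge\tfrac{k}{k+1}\sum_q p(q)\,\E_{\vals\sim G(q)}[R_{SP_{n+k(q)}}(\vals)]\ge\tfrac{k}{2(k+1)}\sum_q p(q)\,\E_{\vals\sim G(q)}[R_M^q(\vals)]\ge\tfrac{k}{2(k+1)}\mathcal{R}_M$, i.e.\ $\mathcal{R}_M\le 2\tfrac{k+1}{k}\,\E[R_{SP_{n+m}}]$, which is the claim.

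The step I expect to be the main obstacle is making the coupling/monotonicity argument precise: one must condition in the right order ($q$ first, then the extra indices $r$, then the realized values), use the coverage event $E(q,r)$ so that the extra i.i.d.-from-$F$ bidders genuinely dominate the one-per-group benchmark, and justify pulling the benchmark's conditional revenue outside the $\Pr_r[\cdot]$ factor. The monotonicity of no-reserve second price revenue under adding a bidder is immediate but should be stated explicitly, and the coupon-collector estimate is routine; the nested-expectation bookkeeping is the only delicate part.
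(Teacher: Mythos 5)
Your proposal is correct and follows essentially the same route as the paper: a coupon-collector/union-bound argument showing that $\Theta(\log(k)/\delta)$ extra i.i.d.\ draws cover all $k$ mixture components with probability at least $\tfrac{k}{k+1}$, combined with the factor-$2$ guarantee of Theorem~\ref{thm:mainThm} on the coverage event and monotonicity of second-price revenue under adding bidders, yielding the identical $2\tfrac{k+1}{k}$ bound. The only difference is presentational: the paper invokes Theorem~\ref{thm:mainThm} as a black box conditional on the coverage event, whereas you inline its decomposition over the index profile $q$ and spell out the coupling and the order of conditioning, which makes rigorous a step the paper states somewhat informally.
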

\begin{proof}
Suppose that we bring $n^*$ extra bidders in the auction. Even if the
decomposition of the distribution $F$ doesn't correspond to an actual market
decomposition, we can always think of the value of each of the bidders drawn as
follows: first we draw a number $t$ from $1$ to $k$ according to the mixture
probabilities $p_t$ and then we draw a value from distribution $G_t$.

Let $\mathcal{E}$ be the event that all numbers $1$ to $k$ are represented by
the $n^*$ random numbers drawn to produce the value of the $n^*$ extra bidders.
The problem is a generalization of the coupon collector problem where there are
$k$ coupons and each coupon arrives with probability $p_t\geq \delta$. The
relevant question is, what is the probability that all the coupons are collected
after $n^*$ coupon draws? The probability that a coupon $t$ is not collected
after $n^*$ draws is: $(1-p_t)^{n^*}\leq (1-\delta)^{n^*}\leq e^{-n^*\delta}$.
Hence, by the union bound, the probability that some coupon is not collected
after $n^*$ draws is at most $k e^{-n^*\delta}$. Thus the probability of event
$\mathcal{E}$ is at least $1-ke^{-n^*\delta}$. Thus if
$n^*=\frac{\log(k)+\log(k+1)}{\delta}$ then the probability of $\mathcal{E}$ is
at least $1-\frac{1}{k+1}$.

Conditional on event $\mathcal{E}$ happening we know that the revenue of the
auction is the revenue of the initial auction with at least one player extra
drawn from each of the underlying $k$ regular distributions.  Thus we can apply
our main theorem \ref{thm:mainThm} to get that the expected revenue conditional
on $\mathcal{E}$ is at least $\frac{1}{2}$ of the optimal revenue with only the
initial $n$ bidders. Thus: \begin{align*} \mathcal{R}_{SP_{n+n^*}} \geq
\left(1-\frac{1}{k+1}\right)\E_{v,\tilde{v}\sim
F^{n+n^*}}[R_{SP_{n+n^*}}(v,\tilde{v})|\mathcal{E}] \geq
\left(1-\frac{1}{k+1}\right)\frac{1}{2}\mathcal{R}_M
\end{align*}
\qed\end{proof}

\begin{theorem}
\label{thm:iid-hazard-rate}
Consider an auction among $n$ i.i.d. irregular bidders where the bidders' distribution $F$ can be decomposed
into a mixture of $k$ regular distributions $\{G_1,\ldots,G_k\}$ such that $G_1$ hazard rate dominates $G_t$
for all $t>1$. The revenue of the optimal auction in this setting is at most $2\frac{e}{e-1}$ the revenue of a Vickrey
auction with $\frac{1}{p_1}$ extra bidders drawn from distribution $F$.
\end{theorem}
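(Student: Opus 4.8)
The plan is to run the same scheme as in the proof of Theorem~\ref{thm:iid-irregular}, but exploiting the hazard-rate dominant group $G_1$ so that we only need \emph{one} of the recruited bidders to land in that group (rather than hitting all $k$ groups), and invoking Theorem~\ref{thm:hazard-rate} in place of Theorem~\ref{thm:mainThm}. Concretely, even though the decomposition $F=\sum_t p_t G_t$ need not correspond to a real population split, we may generate each of the $n^*=\lceil 1/p_1\rceil$ recruited bidders by first sampling an index $t\in[k]$ with probability $p_t$ and then sampling a value from $G_t$. Let $\mathcal{E}$ be the event that at least one of the $n^*$ sampled indices equals $1$. Then $\Pr[\overline{\mathcal{E}}]=(1-p_1)^{n^*}\le (1-p_1)^{1/p_1}\le e^{-1}$, since $\ln(1-p_1)\le -p_1$, so $\Pr[\mathcal{E}]\ge 1-\tfrac{1}{e}=\tfrac{e-1}{e}$.

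The second step is to show $\E\!\left[R_{SP_{n+n^*}}\mid\mathcal{E}\right]\ge\tfrac12\,\mathcal{R}_M$. Conditioned on $\mathcal{E}$, designate the first recruited bidder whose sampled index is $1$; conditioned on this, its value is distributed exactly as $G_1$ and is independent of the original $n$ values. Removing all the other recruited bidders can only decrease the second-highest value, so pointwise $R_{SP_{n+n^*}}\ge R_{SP_{n+1}}$, where the latter denotes the second-price auction on the $n$ original $F$-distributed bidders together with this one $G_1$-distributed bidder. Taking expectations (and noting that which designated bidder we pick does not change the distribution of its value), $\E[R_{SP_{n+n^*}}\mid\mathcal{E}]\ge\mathcal{R}_{SP_{n+1}}$, the expected revenue of Vickrey on $n$ i.i.d.\ $F$-bidders plus one $G_1$-bidder. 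Since $F$ is a mixture of $G_1,\dots,G_k$ with $G_1\succeq_{hr}G_t$ for all $t$, Theorem~\ref{thm:hazard-rate} applies and gives $\mathcal{R}_M\le 2\,\mathcal{R}_{SP_{n+1}}$, hence $\E[R_{SP_{n+n^*}}\mid\mathcal{E}]\ge\tfrac12\,\mathcal{R}_M$.

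Combining the two steps,
\[
\mathcal{R}_{SP_{n+n^*}}\ \ge\ \Pr[\mathcal{E}]\cdot\E\!\left[R_{SP_{n+n^*}}\mid\mathcal{E}\right]\ \ge\ \frac{e-1}{e}\cdot\frac12\,\mathcal{R}_M,
\]
which rearranges to $\mathcal{R}_M\le\tfrac{2e}{e-1}\,\mathcal{R}_{SP_{n+n^*}}$ with $n^*=\lceil 1/p_1\rceil$, as claimed. The step I expect to need the most care is the conditioning argument in the middle paragraph: one must check that, conditioned on $\mathcal{E}$, the designated extra bidder is distributed exactly as $G_1$ and independent of the original profile (so Theorem~\ref{thm:hazard-rate} applies verbatim), and that the monotonicity "adding bidders weakly increases second-price revenue" is used in the right direction. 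This is the analogue of the (slightly informal) closing paragraph in the proof of Theorem~\ref{thm:iid-irregular}, and it goes through for the same reason.
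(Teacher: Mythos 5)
Your proposal is correct and follows essentially the same route as the paper: condition on the event that at least one of the $\approx 1/p_1$ recruited bidders is drawn from the hazard-rate dominant distribution $G_1$ (probability at least $1-1/e$), apply the hazard-rate dominance result (Theorem~\ref{thm:hazard-rate}, built on Lemma~\ref{lem:hazardCommensurateExtension}) to bound the conditional revenue by $\tfrac12\mathcal{R}_M$, and combine. Your write-up merely makes explicit the conditioning and the "dropping extra bidders only lowers Vickrey revenue" step that the paper leaves implicit.
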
 
\begin{proof}
Similar to theorem \ref{thm:iid-irregular} conditional on the even that an extra player is drawn from
the hazard rate distribution, we can apply Lemma \ref{lem:hazardCommensurateExtension} to get that this
conditional expected revenue is at least half the optimal revenue with the initial set of players. 
If we bring $n^*$ extra players then the probability of the above event happening is $1-(1-p_1)^{n^*}\geq 1-e^{-n^*p_1}$. Setting $n^*=\frac{1}{p_1}$ we get the theorem.
\qed\end{proof}

\paragraph{Prior-Independent Mechanisms.} The two theorems above imply
prior-independent mechanisms for the i.i.d. irregular setting based on a
reasoning similar to the one used by~\cite{DRY10} in converting Bulow-Klemperer
results to prior-independent mechanims in the i.i.d. regular setting.
Specifically, instead of bringing $k$ extra i.i.d. bidders we could use the
maximum value of a random subset of $k$ existing bidders as a reserve on the
remaining $n-k$ bidders. The theorems above then imply that this
prior-independent mechanism yields a constant approximation with respect to the
optimal mechanism among the $n-k$ bidders. Further, since the bidders are all
i.i.d., and the $k$ bidders were chosen before their valuation are drawn, the
expected optimal revenue among the $n-k$ bidders is at least $1-\frac{k}{n}$ of
the optimal revenue among the $n$ bidders. Thus as long as the number of bidders
$k$ required by Theorems \ref{thm:iid-irregular} and \ref{thm:iid-hazard-rate}
is smaller than $n$, this approach yields a prior-independent mechanism with a
meaningful revenue approximation guarantee.  Hence, Theorem
\ref{thm:iid-irregular} implies that if $n\delta\geq c\log(k)$ (i.e. the
expected number of players from each population is at least $c\log(k)$) the
random sampling mechanism described above is $2\frac{k+1}{k}\frac{c}{c-1}$
approximate.  Similarly, Theorem \ref{thm:iid-hazard-rate} implies that it is
$2\frac{e}{e-1}\frac{c}{c-1}$-approximate, if $n\cdot p_1\geq c$, i.e. the
expected number of players from the hazard-rate dominant distribution at least
$c$.

\section{Vickrey with Single Reserve for Irregular Settings}
\label{sec:singleReserve}
% !TEX root=irregular_bulow_klemperer.tex
In this section we prove revenue guarantees for Vickrey auction with a single
reserve in the general irregular setting. 

\begin{theorem}\label{thm:anonIrregularNiid}
Consider an auction among $n$ non-i.i.d irregular bidders where each bidder's distribution $F_i$ is some
mixture of $k$ regular distribution $\{G_1,\ldots,G_k\}$ (the set of regular distributions is 
the same for all bidders but the mixture probabilities could be different). The revenue 
of the optimal auction in the above setting is at most $4k$ times the revenue of a
second price auction with a single reserve price which corresponds to the monopoly reserve price of one of the $k$ distributions $G_i$.  
%(Proof in Appendix~\ref{app:anonymous})
\end{theorem}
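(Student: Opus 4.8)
The plan is to reduce the irregular instance to a collection of i.i.d.\ \emph{regular} instances, exactly as in the proof of Theorem~\ref{thm:mainThm}, and then to exploit that for i.i.d.\ regular bidders the optimal auction \emph{is} the Vickrey auction with the monopoly reserve. First I would write each $v_i$ as a two-stage draw: a $k$-valued coin $q_i\sim(p_{i,1},\dots,p_{i,k})$, followed by $v_i\sim G_{q_i}$. Since the optimal mechanism cannot observe the index profile $q$, its revenue is at most that of the optimal mechanism that first learns $q$, giving $\mathcal{R}_M\le\sum_{q}p(q)\,\mathrm{OPT}(q)$, where $\mathrm{OPT}(q)$ is the optimal revenue for the non-i.i.d.\ regular profile $(G_{q_1},\dots,G_{q_n})$.

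Next I would split $\mathrm{OPT}(q)$ across the $k$ underlying distributions. Let $N_t(q)=\{i:q_i=t\}$. By Myerson's virtual-surplus characterization~\cite{M81} and regularity (so that no ironing is needed), $\mathrm{OPT}(q)=\E_{\vals}[\max(0,\max_i\phi_{q_i}(v_i))]\le\sum_{t=1}^k\E_{\vals}[\max(0,\max_{i\in N_t(q)}\phi_t(v_i))]=\sum_{t=1}^k\mathrm{OPT}_t(q)$, where $\mathrm{OPT}_t(q)$ is the optimal revenue obtainable from the bidders in $N_t(q)$ alone. Because these bidders are i.i.d.\ draws from the regular distribution $G_t$, the optimal auction among them is precisely the second price auction with the monopoly reserve $r_t$ of $G_t$; hence $\mathrm{OPT}_t(q)$ equals the revenue of the second price auction with anonymous reserve $r_t$ restricted to the bidder set $N_t(q)$.

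Then I would undo the conditioning on $q$ one reserve at a time. For a fixed $t$, $\sum_q p(q)\,\mathrm{OPT}_t(q)$ is the expected revenue of running the second price auction with the \emph{fixed} anonymous reserve $r_t$ on the random sub-population $N_t(q)$. Coupling this process with running the second price auction with reserve $r_t$ on all $n$ original bidders, the latter always faces a superset of $N_t(q)$; and since the revenue of a second price auction with a fixed anonymous reserve is monotone non-decreasing when bidders are added (an easy case analysis on how many submitted values clear the reserve), its revenue pointwise dominates. Taking expectations, $\sum_q p(q)\,\mathrm{OPT}_t(q)\le\mathcal{R}_{r_t}$, where $\mathcal{R}_{r_t}$ is the revenue of the second price auction with anonymous reserve $r_t$ on the original irregular instance (a quantity that no longer depends on $q$). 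Summing over $t$, $\mathcal{R}_M\le\sum_{t=1}^k\mathcal{R}_{r_t}\le k\cdot\max_t\mathcal{R}_{r_t}$, so the second price auction whose anonymous reserve is the best of the $k$ monopoly reserves $\{r_1,\dots,r_k\}$ is a $k$-approximation --- in particular the claimed $4k$-approximation. (If one prefers not to use the exact i.i.d.-regular optimality of Vickrey-with-monopoly-reserve, one can instead invoke the Hartline and Roughgarden single-reserve $4$-approximation~\cite{HR09} for each regular profile $q$ and union over the $k$ monopoly reserves, which reproduces the factor $4k$ directly.)

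The steps are mostly bookkeeping, and I expect the point that needs the most care is the monotonicity step in the third paragraph: it is exactly what turns the per-group bounds (each of which lives on a different, $q$-dependent sub-population) into a single bound against the original $n$-bidder auction with one fixed anonymous reserve. A secondary point is the two-stage coupling, which must be set up so that $N_t(q)$ is literally a sub-instance of the original auction in every realization; one should also check that $r_t=0$ is handled gracefully when $\phi_t$ is everywhere positive.
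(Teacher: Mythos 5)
Your proposal is correct, and it takes a genuinely different route from the paper. The paper, after the same conditioning on the index profile $q$, goes through a Bulow--Klemperer-style step: it invokes Theorem 6.3 of \cite{RTY12} to bound the conditional optimal revenue by twice the Vickrey revenue with $k(q)$ extra bidders, then converts those extra bidders into a random anonymous reserve at the cost of another factor of $2$, splits the resulting revenue across the $k$ groups (factor $k$), and finally replaces the random reserve with the monopoly reserve of the best group using optimality of Vickrey-with-monopoly-reserve inside that i.i.d.\ regular group; the losses multiply to $4k$. You instead work directly with Myerson's virtual-surplus characterization: conditioned on $q$ you bound $\E[\max(0,\max_i\phi_{q_i}(v_i))]$ by the sum over groups of the per-group optimal revenues, use that each group is an i.i.d.\ regular sub-instance whose optimum is exactly the second price auction with the monopoly reserve $r_t$, and then lift each per-group bound to the full $n$-bidder instance via the pointwise monotonicity of anonymous-reserve Vickrey revenue under adding bidders (which you correctly flag as the step needing care, and which does hold). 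This avoids both factor-$2$ losses and yields a $k$-approximation by the best of the $k$ monopoly reserves, strictly stronger than the stated $4k$; your parenthetical fallback via the single-reserve $4$-approximation of \cite{HR09} per profile $q$, summed over the $k$ reserves, recovers the paper's $4k$ bound if one prefers not to use exact optimality. What the paper's route buys is brevity in context, since it reuses the already-established inequality \eqref{eqn:mainEqn} and known extra-bidder results; what yours buys is a self-contained argument (only Myerson plus monotonicity) and a better constant.
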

\begin{proof}
We use the same notation as in Section~\ref{sec:irregular}.  In particular, we let $q$ denote the index profile 
of distributions for all players and $p(q)=\prod_{i=1}^n p_{i,{q_i}}$ be the probability that an index profile arises.
Let $G(q)=\times_i G_{q_i}$ be the product distribution that corresponds to
how players values are distributed conditional on the coin tosses having value $q$. 

Let $M_q$ be the optimal mechanism when bidders' distribution profile is $q$. 
Let $\mathcal{R}_{M}^q$ be the expected revenue of mechanism $M_q$. By equation~\eqref{eqn:mainEqn} in Section~\ref{sec:irregular} we have,
\begin{align}
\mathcal{R}_M\leq\sum_{q\in [1..k]^n} p(q) \E_{\vals\sim G(q)}[R_M^q(\vals)]&\leq 
2\sum_{q\in [1..k]^n} p(q) \E_{\vals\sim G(q)}[R_{SP_{n+k(q)}}(\vals)]\label{eqn:mainEqnRepeat}
\end{align}

Consider the auction $SP_{n+k(q)}$. If instead of adding the $k(q)$ extra bidders, we place a random reserve drawn from the distribution of the maximum value among the $k(q)$ extra bidders, and ran the Vickrey auction. Call the later $SP_n(R(q))$. If the winner of the auction $SP_{n+k(q)}$ is one among the original $n$ bidders, clearly  $SP_{n+k(q)}$ and $SP_n(R(q))$ will have the same expected revenue. Further, the expected revenue of $SP_{n+k(q)}$  conditioned on the winner being one among the original $n$ bidders is no smaller than the expected revenue of $SP_{n+k(q)}$ conditioned on the winner being one among the newly added $k(q)$ bidders. Also, the probability that the winner comes from the newly added $k(q)$ bidders is at most $1/2$. Thus $SP_n(R(q)) \geq \frac{1}{2}SP_{n+k(q)}$. Combining this with Equation~\eqref{eqn:mainEqnRepeat}, we have
\begin{align}
\mathcal{R}_M&\leq 2\sum_{q\in [1..k]^n} p(q) \E_{\vals\sim G(q)}[R_{SP_{n+k(q)}}(\vals)]
\leq 4\sum_{q\in [1..k]^n} p(q) \E_{\vals\sim G(q)}[R_{SP_{n}(R(q))}(\vals)]\nonumber\\
%&= 4\sum_{q\in [1..k]^n} p(q) \E_{\vals\sim G(q)}[\sum_{t=1}^{k}R_{SP_{n}(R(q),t)}(\vals)]\\
&= \sum_{t=1}^{k}4\sum_{q\in [1..k]^n} p(q) \E_{\vals\sim G(q)}[R_{SP_{n}(R(q),t)}(\vals)]\label{eqn:groupSplit}\\
&\leq 4k\sum_{q\in [1..k]^n} p(q) \E_{\vals\sim G(q)}[R_{SP_{n}(R(q),t^*)}(\vals)]\label{eqn:bestGroup}
\end{align}
In equation~\eqref{eqn:groupSplit}, the revenue $R_{SP_{n}(R(q))}(\vals)$ is written as $\sum_{t=1}^{k}R_{SP_{n}(R(q),t)}(\vals)$, i.e., as the sum of contributions from each population group. Given this split, there exists a polulation group $t^*$ that gets at least $\frac{1}{k}$ fraction of all groups together, and thus at least $\frac{1}{4k}$ fraction of the optimal mechanism, which is what is expressed through inequality~\eqref{eqn:bestGroup}. 

Now the auction $SP_{n}(R(q))$ from the perspective of the group $t^*$ is just the Vickrey auction run for group $t^*$ alone with a single random reserve of $\max\{R(q)$, Maximum value from groups other than $t^* \}$. However within the group $t^*$ since we are in a i.i.d regular setting it is optimal to run Vickrey auction for the group $t^*$ alone with the monopoly reserve price of that group. That is if we replace the single reserve of $\max\{R(q)$, Maximum value from groups other than $t^* \}$ with the optimal (monopoly) reserve price for $t^*$, Vickrey auction for group $t^*$ with such a reserve gives no lesser revenue, and this holds for every $q$! Finally, when we add in the agents from other groups, single-item Vickrey auction's revenue for the entire population with monopoly reserve price of group $t^*$ is no smaller than the revenue of single-item Vickrey auction for group $t^*$ alone with the monopoly reserve price of group $t^*$. Chaining the last two statements proves the theorem. 
%Note that the distribution chosen for the randomized reserve $R(q)$ depends only on the distinct distributions present in $q$. Thus there are at most $2^k$ such randomized reserves. Hence choosing one among the $2^k$ randomized reserves at random will give us a $\frac{1}{4}\cdot\frac{1}{2^k}$ approximation to $\mathcal{R}_M$. This also means that there is a single deterministic reserve that gets the same approximation. 
\qed\end{proof}

\bibliographystyle{acmsmall}
\bibliography{ibk}
\end{document}